\documentclass[14pt]{article}

\usepackage{arxiv}

\usepackage[utf8]{inputenc} % allow utf-8 input
\usepackage[T1]{fontenc}    % use 8-bit T1 fonts
\usepackage{hyperref}       % hyperlinks
\usepackage{url}            % simple URL typesetting
\usepackage{booktabs}       % professional-quality tables
\usepackage{amsfonts}       % blackboard math symbols
\usepackage{nicefrac}       % compact symbols for 1/2, etc.
\usepackage{microtype}      % microtypography
\usepackage{lipsum}
\usepackage{graphicx}
\graphicspath{ {./images/} }

\usepackage{amsthm}
\usepackage{amsmath}
\usepackage{amssymb}
\usepackage{mathtools}
\usepackage{xcolor}
\usepackage{dsfont}
\usepackage{verbatim}

%----------------------------------------------------------------------
%% Fields
%----------------------------------------------------------------------
\newcommand{\N}{\mathbb{N}}
\newcommand{\R}{\mathbb{R}}

\newcommand{\Prob}{\mathbb{P}}
\newcommand{\Han}{\mathbb{H}}
\DeclareMathOperator{\ind}{\mathds{1}}
\DeclareMathOperator{\E}{\mathbb{E}}
\DeclarePairedDelimiter{\abs}{\lvert}{\rvert}
\DeclarePairedDelimiter{\norm}{\lvert\lvert}{\rvert\rvert_{\infty}}

%----------------------------------------------------------------------
%% Theorem-like
%----------------------------------------------------------------------

\newtheorem{theorem}{Theorem}
\newtheorem{corollary}[theorem]{Corollary}

\newtheorem{counterexample}{Counterexample}
\newtheorem{definition}[theorem]{Definition}

\newtheorem{remark}{Remark}
\newtheorem{example}{Example}

%----------------------------------------------------------------------
%% Numbering
%----------------------------------------------------------------------
\numberwithin{theorem}{section}
\numberwithin{remark}{section}
\numberwithin{example}{section}
\numberwithin{counterexample}{section}
%----------------------------------------------------------------------
%% Proofs

%----------------------------------------------------------------------
%% New command
%----------------------------------------------------------------------
\DeclareMathOperator*{\argmax}{arg\,max}
\newcommand{\setplayers}{N}
\newcommand{\player}{i} 
\newcommand{\setactions}{A}
\newcommand{\paction}{a} 
\newcommand{\ppaction}{a}
\newcommand{\maction}{x}

\newcommand{\caction}{\gamma}
\newcommand{\utility}{u}
\newcommand{\mixedset}{\Delta}
\newcommand{\game}{G}
\newcommand{\Vp}{V_{pure}}
\newcommand{\vp}{v_{pure}}
\newcommand{\Vm}{V_{mixed}}
\newcommand{\vm}{v_{mixed}}
\newcommand{\vc}{v_{corr}}

\newcommand{\vh}{v_{h}}
\newcommand{\rounds}{M}
\newcommand{\emp}{Z}
\newcommand{\Corr}{CED}

\title{Playing Against No-Regret Players}

\author{
 Maurizio D'Andrea \\
  ANITI and Toulouse School of Economics, UT Capitole\\
 \texttt{maurizio.dandrea@tse-fr.eu}
  %% examples of more authors
  %%\And
  %%Zixuan Lu \\
  %%School of Coumputing and Information\\
  %%University of Pittsburgh\\
  %%Pittsburgh, PA 15213 \\
  %%\texttt{ZIL50@pitt.edu} \\
  %%\AND
  %%Jerome Renault \\
  %% Affiliation \\
  %% Address \\
  %% \texttt{email} \\
  %% \And
  %% Coauthor \\
  %% Affiliation \\
  %% Address \\
  %% \texttt{email} \\
  %% \And
  %% Coauthor \\
  %% Affiliation \\
  %% Address \\
  %% \texttt{email} \\
}

\begin{document}
\maketitle

\begin{abstract}

In increasingly different contexts, it happens that a human player has to interact with artificial players who make decisions following decision-making algorithms. How should the human player play against these algorithms to maximize his utility? Does anything change if he faces one or more artificial players? The main goal of the paper is to answer these two questions. 

Consider n-player games in normal form repeated over time, where we call the human player optimizer, and the $(n-1)$ artificial players, learners. We assume that learners play no-regret algorithms, a class of algorithms widely used in online learning and decision-making. In these games, we consider the concept of Stackelberg equilibrium.
In a recent paper, Deng, Schneider, and Sivan have shown that in a 2-player game the optimizer can always guarantee an expected cumulative utility of at least the Stackelberg value per round. In our first result, we show, with counterexamples, that this result is no longer true if the optimizer has to face more than one player. Therefore, we generalize the definition of Stackelberg equilibrium introducing the concept of correlated Stackelberg equilibrium. Finally, in the main result, we prove that the optimizer can guarantee at least the correlated Stackelberg value per round. Moreover, using a version of the strong law of large numbers, we show that our result is also true  almost surely for the optimizer utility instead of the optimizer's expected utility.
\end{abstract}

% keywords can be removed
\keywords{Normal Form Games \and Stackelberg Equilibrium \and No-Regret}

\section{Introduction}
Consider a $n$-player finite game repeated over time. In practice in online learning, strategies are often computed through decision-making algorithms and a class of these algorithms widely used and deeply studied are the No-Regret algorithms. A No-Regret strategy ensures that looking at the history of the strategies played, there is no action that if it had been played at each round would have guaranteed a better payoff.
How should the human player (Optimizer) behave to face No-Regret learners? Is there any gameplay for the optimizer to increase his utility?  Does anything change if he faces one or more artificial players? Answering these questions is the main focus of the paper.

An important role in our work is played by the concept of Stackelberg equilibrium of the game. The Stackelberg version of our game is a game where the optimizer chooses his strategy first and communicates his choice; after observing the strategy of the optimizer, the learners play a Nash equilibrium. Moreover, we define the Stackelberg value as the worst/best utility for the optimizer in any Stackelberg equilibrium. In a recent paper \cite{A1} Deng, Schneider and Sivan have proved that in games with an optimizer and a learner, if the learner plays No-Regret the optimizer can always guarantee an expected utility of at least the Stackelberg value per round. Starting from this result, the first question we ask ourselves is: is the result valid even in a game with more than one learner? The answer to this question is No. Indeed, we show that it is possible to find a simple counterexample with an optimizer and two learners where the learners play No-Regret but the optimizer receives an expected utility strictly lower than the Stackelberg value per round.

Given this negative result, we extend the definition of Stackelberg equilibrium so that it fits better games with more than one No-Regret learner. We first recall well-known results for No-Regret strategies. In a game in normal form where every player plays No-Regret, the empirical average distribution converges almost surely to the set of correlated equilibrium distributions. Hence, we define the correlated Stackelberg equilibrium as an equilibrium where the optimizer plays first and communicates his strategy, then the learners play a correlated equilibrium instead of a Nash equilibrium. We define the correlated Stackelberg value as the worst/best utility for the optimizer in any correlated Stackelberg equilibrium. Thus, in our main result, we prove that in a $n$-player game with an optimizer and $n-1$ learners, if the learners play No-Regret the optimizer can guarantee an expected utility of at least the correlated Stackelberg value per round. Finally using a version of the strong law of large numbers we show that our main result is also true almost surely for the utility of the optimizer and not only for the expected utility of the optimizer.

\paragraph{Other related work}
The notion of No-Regret learning is one of the most famous and deeply studied in online learning and decision making. The first notion of Regret, called External-Regret, was given by Hannan in \cite{Hannan} and it was subsequently taken up in numerous other papers (see \cite{NR1,NR2,NR3,NR4}). Later another stronger Regret notion called Internal-Regret was introduced earlier in \cite{NIR}. Several No-Internal Regret algorithms have been created, for example in \cite{NIR2,HartMasColell}. This notation is linked with the concept of correlated equilibrium introduced for the first time by Aumann in \cite{AUMANN197467}. The paper \cite{NIR1} of Hart and Mas-Colell prove that in a n-player game in normal form where every player plays No-Internal Regret the empirical average distribution converges almost surely to the set of correlated equilibrium distributions. Moreover, if every player plays No-External Regret the empirical average distribution converges almost surely to the Hanna number of papers that study the behavior of No-regret algorithm in long term games, see for example \cite{flokas2020noregret,10.1257/000282803322655581}.

\section{Model}
\label{sec:Model}
\subsection{Games and Equilibria}
In this paper we consider n-player games, where each player has a finite set of strategies. Let $\setplayers=\{1,...,n\}$ be the set of players and let $\setactions_{\player}$ be the finite set of actions of player $\player$. Denote with $\setactions=\prod_{\player=1}^n\setactions_{\player}$ the set of strategy profiles. For every player $\player$ we define  $\utility_{\player}:\setactions \to \R$ the utility function of player $\player$, i.e. if the players play the strategy profile $\paction\in\setactions$ the payoff of player $\player$ is $\utility_{\player}(\paction)$. Players can randomize their strategies, so we define with $\mixedset_{\player}$ the set of mixed strategies for the player $\player$ and denote $\mixedset=\prod_{\player\in\setplayers}\mixedset_{\player}$. As usual we may linearly extend the utility functions to the set of mixed strategies.
For the rest of the article, we will refer to the n-th player as the optimizer and we will call the remaining players Learners and we call $\game$ the game described above. 

\begin{definition}
A strategy $\maction_{\player}\in\mixedset_{\player}$ is a {\sl Best-Reply} for player $\player$ against $\maction_{-\player}\in\mixedset_{-\player}$ \footnote{We use the notation $x_{-i}$ to say that we consider all elements except element $i$; i.e. $x_{-i}=(x_1,...,x_{i-1},x_{i+1},...,x_n)$. Moreover we denote $\Delta_{-i}=\prod_{j\neq i}\Delta_j$ } if $$\maction_{\player}\in\argmax_{z\in\mixedset_i}\utility_{\player}(z,\maction_{-\player}).$$
We denote  $BR^i(\maction_{-i})$ the set of all the best replies of player $\player$ against the mixed action $\maction_{-\player}\in\mixedset_{-\player}$.
\end{definition}

We are now ready to define the notion of equilibrium that we consider in the game $\game$ where there are $n-1$ learners and $1$ optimizer. This notion was introduced for the first time by Von Stackelberg in \cite{stackelberg}. A Stackelberg equilibrium is an equilibrium where the followers play a Nash equilibrium once they have observed the optimizer’s mixed strategy. We refer to an optimistic Stackelberg equilibrium when the followers play a Nash equilibrium that
maximizes the optimizer’s utility, and to a pessimistic Stackelberg equilibrium when they play a Nash equilibrium that minimizes the optimizer utility.

\begin{definition}
\label{SV}
The profile of strategies $(\paction,\maction_n)\in \setactions_{-n} \times \mixedset_n$ is a {\sl pure Optimistic Stackelberg Equilibrium} ($POSE$) if
$$\utility_n(\paction,\maction_n)=\max_{y\in \mixedset_n}\max_{z\in \setactions_{-n}}\{ \utility_n(z,y)|\ z_i\in BR^i(z_{-i},y)\forall i=1,...,n-1\}.$$
We call the value $\Vp:=\utility_n(\paction,\maction_n)$ the pure Optimistic Stackelberg value of the game if such an equilibrium exists. 

The profile of strategies $(\paction,\maction_n)\in \setactions_{-n} \times \mixedset_n$ is a {\sl pure Pessimistic Stackelberg Equilibrium} ($PPSE$) if
$$\utility_n(\paction,\maction_n)=\max_{y\in \mixedset_n}\min_{z\in \setactions_{-n}}\{ \utility_n(z,y)|\ z_i\in BR^i(z_{-i},y) \forall i=1,...,n-1\}.$$
We call the value $\vp:=\utility_n(\paction,\maction_n)$ the pure Pessimistic Stackelberg value of the game if such an equilibrium exists.
 \end{definition}

However, in a game $\game$ with $n>2$ pure Stackelberg Equilibria may not exist. Therefore we extend the definition of Stackelberg equilibrium in a natural way so that its existence is guaranteed in any n-player game.

\begin{definition}
The profile of strategies $\maction\in \mixedset$ is a {\sl mixed Optimistic Stackelberg Equilibrium} ($MOSE$) if
$$\utility_n(\maction)=\max_{y \in \mixedset_n}\max_{z\in \mixedset_{-n}}\{ \utility_n(z,y)|\ z_i\in BR^i(z_{-i},y) \forall i=1,...,n-1\}.$$
We call the value $\Vm:=\utility_n(\maction)$ the mixed Optimistic Stackelberg value of the game.

The profile of strategies $\maction\in \mixedset$ is a {\sl mixed Pessimistic Stackelberg Equilibrium} ($MPSE$) if
$$\utility_n(\maction)=\max_{y\in \mixedset_n}\min_{z\in \mixedset_{-n}}\{ \utility_n(z,y)|\ z_i\in BR^i(z_{-i},y) \forall i=1,...,n-1\}.$$
We call the value $\vm:=\utility_n(\maction)$ the mixed Pessimistic Stackelberg value of the game.
\end{definition}

\begin{remark}
The values defined above respect the following chain of inequalities
$$\vm \leq \vp \leq \Vp \leq \Vm.$$
Moreover for $n=2$ we have $\Vp=\Vm$ and $\vp=\vm$.
\end{remark}

\subsection{No-Regret}

From here on we are interested in the game $\game$ repeated over time. Hence, we denote by $\paction^t_{\player}$  the realization at time $t$ of the player $\player$ . Let $\paction^t=(\paction^t_{1},...,\paction^t_n)$ and we assume that players' utilities are additive over time.
A strategy for the player $\player$ is an element $\sigma_{\player}=(\sigma_{\player}^t)_{t\geq1}$ where, for each $t\in\N$, $\ \sigma_{\player}^t$ is a map from $(A)^{t-1}$ to $\mixedset_{\player}$. Denote by $\sigma=(\sigma_1,....,\sigma_n)\in\Sigma$ the profile of strategies. For each stage $t$, $\utility_{\player}(\paction_1^t,...,\paction_{n}^t,)$ is the random realized payoff at stage t for the player $\player$. Finally, we denote with $\Prob_{\sigma}$ the probability induced by the infinite sequences $(a^t)_{t\geq0}$.

To describe a possible optimal behavior of the players in these types of repeated games the no-regret algorithms are the most commonly used.

\begin{definition}
A strategy $\sigma_i$ of the learner $\player$ has {\sl No-Regret} if for every profile of strategies $\sigma_{-\player}$,
$$\left(\max_{\paction\in\setactions_{\player}}\frac{1}{\rounds}\sum_{t=1}^{\rounds}(\utility_{\player}(\paction,\paction^t_{-\player})-\utility_{\player}(\paction^t))\right)_{+} \xrightarrow{M\to\infty} 0 \qquad \Prob_{\sigma} \ a.s. \footnote{We use the notation $(f)_{+}=\max\{f,0\}$ }$$
\end{definition}

\begin{definition}
A strategy $\sigma_i$ of the learner $\player$ has {\sl No-Internal Regret} if for every profile of strategies $\sigma_{-\player}$,
$$\max_{\paction, b\in\setactions_{\player}}\frac{1}{\rounds}\sum_{\substack{t\in\{1,...,\rounds\}, \\ a_{\player}^t=\paction}} (\utility_{\player}(b,\paction^t_{-\player})-\utility_{\player}(\paction,\paction^t_{-\player}))\xrightarrow{M\to\infty}0 \qquad \Prob_{\sigma} \ a.s.$$
\end{definition}

Intuitively, No-Regret guarantees that strategies perform well against the best possible action, while No-Internal
Regret guarantees that strategies perform as well as the best possible action over each subset of rounds where the same action is played. Thus, it is clear that the set of No-Internal Regret strategies is a subset of the set of No-Regret strategies.

We conclude this section with two well known results for No-Regret and No-Internal Regret strategies. For every $t$, let $\emp_t \in \Delta(\setactions)$ be the {\sl empirical average distribution} of the $n$-tuples of strategies played until time $t$, that is for every $\paction\in \setactions$
\begin{equation}
\emp_t(\paction):=\frac{1}{t} \abs{\{s\leq t : \ \paction^{s}=\paction\}}.
\end{equation}

\begin{theorem}[ Hart and Mas-Colell\cite{NIR1}] \label{J}
Let $\game$ be a game and let $\Corr(\game)$ be the set of correlated equilibrium distribution of the game $\game$; that is
\begin{equation}
\Corr(\game):=\left\{\Phi\in\Delta(\setactions): \sum_{\paction_{-\player}\in \setactions_{-\player}} \Phi(\paction)\utility_{\player}(\paction'_{\player},\paction_{-\player})\leq \sum_{\paction_{-\player}\in \setactions_{-\player}} \Phi(\paction)\utility_{\player}(\paction_{\player},\paction_{-\player}), \quad \forall \player\in\setplayers,\ \paction_{\player},\  \paction'_{\player}\in\setactions_{\player}\right\}.
\end{equation}
Then, if each player of the game $\game$ follows some No-Internal Regret procedure, the distance from the empirical distribution of moves to $\Corr(\game)$ converges a.s. to $0$.
\end{theorem}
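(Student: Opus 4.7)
The plan is to translate the No-Internal Regret hypothesis directly into an asymptotic version of the defining inequalities of $\Corr(\game)$, and then appeal to compactness of the simplex $\Delta(\setactions)$.

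First, I would rewrite the internal-regret expression in terms of the empirical distribution $\emp_\rounds$. Grouping the sum $\sum_{t \le \rounds,\, \paction_\player^t = \paction}$ according to the value of $\paction_{-\player}^t$ gives, for every $\player \in \setplayers$ and every $\paction, b \in \setactions_\player$,
\[
\frac{1}{\rounds}\sum_{\substack{t \le \rounds \\ \paction_\player^t = \paction}}\bigl(\utility_\player(b, \paction_{-\player}^t) - \utility_\player(\paction, \paction_{-\player}^t)\bigr) = \sum_{\paction_{-\player} \in \setactions_{-\player}} \emp_\rounds(\paction, \paction_{-\player})\bigl(\utility_\player(b, \paction_{-\player}) - \utility_\player(\paction, \paction_{-\player})\bigr).
\]
Since the inner max over $\paction, b$ is non-negative (taking $b = \paction$ yields zero) and converges a.s.\ to $0$ by the No-Internal Regret hypothesis, we obtain, on a full-measure event $\Omega^\star$ and for every $\player, \paction, b$,
\[
\limsup_{\rounds \to \infty} \sum_{\paction_{-\player} \in \setactions_{-\player}} \emp_\rounds(\paction, \paction_{-\player})\bigl(\utility_\player(b, \paction_{-\player}) - \utility_\player(\paction, \paction_{-\player})\bigr) \le 0.
\]

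Next I would run a subsequential compactness argument on $\Omega^\star$. The sequence $(\emp_\rounds)$ lies in the compact set $\Delta(\setactions)$, so every subsequence admits a further subsequence converging to some $\Phi \in \Delta(\setactions)$. Passing to the limit in the inequality above, which is continuous in $\emp_\rounds$, shows that $\Phi$ satisfies the defining inequalities of $\Corr(\game)$. Since $\Corr(\game)$ is closed (a finite intersection of closed half-spaces with the simplex), every accumulation point of $(\emp_\rounds)$ lies in $\Corr(\game)$; equivalently, $d(\emp_\rounds, \Corr(\game)) \to 0$.

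The main step is really the algebraic rewriting in the first display: it converts a condition formulated pointwise along the realized trajectory into a statement about the empirical distribution. The remaining work is standard real analysis, and the only probabilistic care needed is to intersect the $n$ full-measure events provided by No-Internal Regret for the $n$ players, which still has probability one.
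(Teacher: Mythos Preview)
The paper does not supply its own proof of this statement: Theorem~\ref{J} is quoted from Hart and Mas-Colell and used as a black box. So there is no in-paper proof to compare against directly. That said, the argument you sketch is correct and is in fact exactly the mechanism the paper exploits later in the proof of Theorem~\ref{corrStack}, where the internal-regret bound for each learner is rewritten as a linear inequality in the empirical distribution $Z_\rounds$ and then one reads off $dist(Z_\rounds,\Corr(\game_\alpha))\to 0$.

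Two minor remarks on presentation. First, your passage from ``every accumulation point of $(\emp_\rounds)$ lies in $\Corr(\game)$'' to ``$d(\emp_\rounds,\Corr(\game))\to 0$'' is justified, but it is worth making explicit that this relies on compactness of $\Delta(\setactions)$ (otherwise the implication can fail). Second, the intersection of the $n$ full-measure events is indeed enough here because the No-Internal Regret definition in this paper already quantifies over \emph{all} opponent strategy profiles $\sigma_{-\player}$; hence each event is a genuine almost-sure event under the joint law $\Prob_\sigma$, and the finite intersection retains probability one.
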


\begin{theorem}
Let $\game$ be a game and let $\Han(\game)$ be the Hannan set of the game $\game$; that is
\begin{equation}
\Han(\game):=\left\{\Phi\in\Delta(\setactions): \sum_{\paction\in \setactions} \Phi(\paction)\utility_{\player}(\paction'_{\player},\paction_{-\player})\leq \sum_{\paction\in \setactions} \Phi(\paction)\utility_{\player}(\paction), \quad \forall \player\in\setplayers, \paction'_{\player}\in\setactions_{\player}\right\}.
\end{equation} 
Then, if each player of the game $\game$ follows
some No-Regret procedure, the distance from the empirical distribution of moves to the Hannan set $\Han(\game)$ converge a.s. to $0$. 
\end{theorem}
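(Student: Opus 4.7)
The plan is to rephrase both the Hannan-set membership condition and the No-Regret condition as linear expressions in the empirical distribution $\emp_{\rounds}$, and then exploit compactness of $\mixedset(\setactions)$ to pass from the pointwise vanishing of each regret to convergence of $\emp_{\rounds}$ to $\Han(\game)$.

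First, I would rewrite each regret in terms of $\emp_{\rounds}$. For any player $\player$ and any $\ppaction'\in\setactions_{\player}$, since $\utility_{\player}(\ppaction',\paction^t_{-\player})$ depends only on $\paction^t_{-\player}$,
\begin{equation*}
\frac{1}{\rounds}\sum_{t=1}^{\rounds}\bigl[\utility_{\player}(\ppaction',\paction^t_{-\player})-\utility_{\player}(\paction^t)\bigr] = \sum_{\paction\in\setactions}\emp_{\rounds}(\paction)\bigl[\utility_{\player}(\ppaction',\paction_{-\player})-\utility_{\player}(\paction)\bigr].
\end{equation*}
The No-Regret hypothesis for player $\player$ therefore produces a $\Prob_{\sigma}$-almost sure event on which the positive part of $\max_{\ppaction'\in\setactions_{\player}}\sum_{\paction}\emp_{\rounds}(\paction)[\utility_{\player}(\ppaction',\paction_{-\player})-\utility_{\player}(\paction)]$ tends to $0$; intersecting these $n$ events over $\player\in\setplayers$ still yields a full-measure event $E$ on which all players' regrets vanish simultaneously.

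Second, I would close the argument by compactness, working on $E$. The simplex $\mixedset(\setactions)$ is compact and $\Han(\game)$ is the closed subset cut out by the finitely many linear inequalities $\sum_{\paction}\Phi(\paction)[\utility_{\player}(\ppaction',\paction_{-\player})-\utility_{\player}(\paction)]\leq 0$. Suppose, for contradiction, that $d(\emp_{\rounds},\Han(\game))\not\to 0$ on $E$. Then some subsequence $\emp_{\rounds_k}$ stays at distance at least $\varepsilon>0$ from $\Han(\game)$, and compactness lets me extract a further subsequence $\emp_{\rounds_{k_j}}\to\Phi\in\mixedset(\setactions)$. Each of the regret expressions is continuous in the distribution, so passing to the limit in the No-Regret bound gives
\begin{equation*}
\sum_{\paction\in\setactions}\Phi(\paction)\bigl[\utility_{\player}(\ppaction',\paction_{-\player})-\utility_{\player}(\paction)\bigr]\leq 0
\end{equation*}
for every $\player$ and $\ppaction'$, so $\Phi\in\Han(\game)$, contradicting $d(\emp_{\rounds_{k_j}},\Han(\game))\geq\varepsilon$.

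The main obstacle is little more than careful bookkeeping: one must intersect the players' $n$ almost sure events to secure a single full-measure set on which the subsequence argument goes through, and one must remember that the No-Regret hypothesis only controls the positive part of the regret, so that the limit of each linear expression is bounded above by $0$ rather than equal to $0$ -- which is precisely the direction of the inequality that defines membership in $\Han(\game)$.
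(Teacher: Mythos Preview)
Your argument is correct. The paper does not supply its own proof of this theorem: it is stated as a known, classical result (in parallel with the Hart--Mas-Colell theorem for No-Internal Regret and correlated equilibria) and is not proved anywhere in the text. So there is no ``paper's approach'' to compare against.

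That said, what you wrote is exactly the standard route for this kind of statement. The rewriting of the average external regret as a linear functional of $\emp_{\rounds}$ is immediate, and the compactness/subsequence argument on $\mixedset(\setactions)$ is the natural way to upgrade ``each defining inequality is asymptotically satisfied'' to ``the distance to the polytope tends to $0$.'' Your two pieces of bookkeeping are also the right ones to flag: intersecting the finitely many almost-sure events over $\player\in\setplayers$ is harmless, and only the upper bound on the limit is available (and needed), since the No-Regret hypothesis controls only the positive part of the regret. One cosmetic remark: in the contradiction step you may want to make explicit that $d(\emp_{\rounds_{k_j}},\Han(\game))\leq d(\emp_{\rounds_{k_j}},\Phi)\to 0$, which is what actually clashes with the lower bound $\varepsilon$.
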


\section{Preliminary Results}

We start presenting a result given by Deng, Schneider, and Sivan in \cite{A1} which shows how in a two-player game, the optimizer can achieve an average utility per round arbitrarily close to the pure optimistic Stackelberg value against a no-regret learner.

\begin{definition}
A strategy $\paction_{\player}\in\setactions_{\player}$ is {\sl very weakly dominated} if  there exists $\maction_{\player}\in\mixedset(\setactions_{\player}-\{\paction_{\player}\})$ such that for all $\paction_{-\player}\in\setactions_{-\player}$, $\utility_{\player}(\maction_{\player},\ppaction_{-\player})\geq\utility_{\player}(\ppaction_{\player},\ppaction_{-\player})$.
\end{definition}

\begin{theorem}[ Theorem 4 \cite{A1}] \label{T1}
Let $\game$ be a game with two players ($n=2$) repeated $\rounds$ times. Let $\Vp$ be the pure optimistic Stackelberg value of the game $\game$. Assume that the learner doesn't have very weakly dominated strategies. If the learner is playing no-regret, then for every $\epsilon>0$ the optimizer can guarantee at least $(\Vp-\epsilon)\rounds-o(\rounds)$ expected cumulative utility.
\end{theorem}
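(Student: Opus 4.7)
The plan is to have the optimizer commit throughout all $M$ rounds to a single mixed strategy $\tilde x_2$, chosen as a small perturbation of the pure optimistic Stackelberg strategy $x_2^*$ so that the target best reply $a_1^*$ becomes \emph{strictly} preferred by the learner. Once $a_1^*$ is the unique best reply with a positive gap $\gamma$, the no-regret hypothesis will force the learner to play $a_1^*$ in all but $o(M)$ rounds, and on each such round the optimizer's expected per-round payoff is within $\epsilon$ of $V_{pure}$.

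Let $(a_1^*,x_2^*)\in A_1\times\Delta_2$ attain the maximum in the $POSE$ definition, so $a_1^*\in BR^1(x_2^*)$ and $u_2(a_1^*,x_2^*)=V_{pure}$. The first step is to exploit the assumption that $a_1^*$ is not very weakly dominated to produce $z\in\Delta_2$ with $u_1(a_1^*,z)>u_1(a,z)$ for every $a\in A_1\setminus\{a_1^*\}$. If no such $z$ existed, a separating-hyperplane / LP-duality argument applied to the family $\{(u_1(a_1^*,a_2)-u_1(a,a_2))_{a_2\in A_2}:a\neq a_1^*\}$ would yield a distribution $\mu$ on $A_1\setminus\{a_1^*\}$ satisfying $u_1(\mu,a_2)\geq u_1(a_1^*,a_2)$ for every $a_2\in A_2$, contradicting the hypothesis. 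Next I fix a small $\eta>0$ and set $\tilde x_2:=(1-\eta)x_2^*+\eta z$: by bilinearity the gap $u_1(a_1^*,\tilde x_2)-u_1(a,\tilde x_2)$ is at least $\eta\gamma_0$ for $a\in BR^1(x_2^*)\setminus\{a_1^*\}$, and, by continuity, it stays positive on $A_1\setminus BR^1(x_2^*)$ for all sufficiently small $\eta$, yielding a uniform gap $\gamma=\gamma(\eta)>0$. At the same time $|u_2(a_1^*,\tilde x_2)-V_{pure}|=O(\eta)$, so shrinking $\eta$ also enforces $u_2(a_1^*,\tilde x_2)\geq V_{pure}-\epsilon$.

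With $\tilde x_2$ fixed, the optimizer draws $a_2^t\sim\tilde x_2$ independently in each round $t\leq M$. Applying the no-regret condition to the constant benchmark action $a_1^*$ gives
\[
\frac{1}{M}\sum_{t=1}^{M}\bigl(u_1(a_1^*,a_2^t)-u_1(a_1^t,a_2^t)\bigr)\xrightarrow{M\to\infty}0\quad\Prob_\sigma\ \text{a.s.}
\]
An Azuma bound on the bounded martingale differences $u_1(\cdot,a_2^t)-u_1(\cdot,\tilde x_2)$ (conditional on the past and on $a_1^t$) lets me replace realized payoffs by their conditional expectations at $\tilde x_2$ with only $o(1)$ loss. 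Taking expectations yields $M^{-1}\sum_{t=1}^{M}(u_1(a_1^*,\tilde x_2)-\E[u_1(a_1^t,\tilde x_2)])=o(1)$, and since each summand is at least $\gamma\cdot\Prob_\sigma(a_1^t\neq a_1^*)$, we obtain $\sum_{t=1}^{M}\Prob_\sigma(a_1^t\neq a_1^*)=o(M)$. Consequently
\[
\E\!\left[\sum_{t=1}^{M}u_2(a_1^t,a_2^t)\right]\geq (M-o(M))\,u_2(a_1^*,\tilde x_2)-o(M)\geq (V_{pure}-\epsilon)M-o(M),
\]
where the constant bound on $|u_2|$ is absorbed into the $o(M)$.

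The main obstacle is the first step: converting ``not very weakly dominated'' into the existence of a uniformly separating $z$. This is the only place where the hypothesis genuinely enters, and it is what makes the perturbation argument go through. Everything downstream is $\epsilon$-bookkeeping together with a standard concentration argument bridging realized and expected payoffs.
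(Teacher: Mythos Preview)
The paper does not give its own proof of this statement: Theorem~\ref{T1} is quoted from \cite{A1} and used as a black box, so there is nothing in the paper to compare your argument against directly.

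That said, your sketch is sound and is essentially the standard route. A couple of remarks. First, your perturbation step can be simplified: since $a_1^*\in BR^1(x_2^*)$, for \emph{every} $a\neq a_1^*$ one has
\[
u_1(a_1^*,\tilde x_2)-u_1(a,\tilde x_2)=(1-\eta)\bigl(u_1(a_1^*,x_2^*)-u_1(a,x_2^*)\bigr)+\eta\bigl(u_1(a_1^*,z)-u_1(a,z)\bigr)\ \geq\ \eta\gamma_0,
\]
so the uniform gap is immediate and you need not split into $BR^1(x_2^*)$ versus its complement. Second, the martingale step you invoke to swap $a_2^t$ for $\tilde x_2$ is precisely the device the paper isolates as Corollary~\ref{CorSLLN} for its own theorems; you could cite that directly instead of Azuma. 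Finally, your LP/minimax derivation of the separating $z$ from the ``not very weakly dominated'' hypothesis is exactly right: the minimax value of the zero-sum game with payoff $u_1(a,a_2)-u_1(a_1^*,a_2)$, row player on $A_1\setminus\{a_1^*\}$ and column player on $A_2$, must be strictly negative, and the column optimizer is your $z$. No genuine gap.
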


The goal of this section is to show that this result is false in a game with more than two players.
To prove it we show how to construct a class of no-internal regret strategies in the following specific case.

\begin{example}
\label{ex1}
Let $\game$ be a $2$-player game $\game$ with set of actions $\setactions_1=\{T,B\}$, $\setactions_2=\{L,R\}$ and utility function for player $1$ given by the matrix
\[
\bordermatrix{
 & L & R \cr
T & 1 & 0 \cr
B & 0 & 1 
}.
\]
Then a strategy $\sigma_1$ of player 1  with no-internal regret can be constructed as follows. Play arbitrarily at stage 1. At the end of stage $t$, for $t\geq1$, compute the vector of regret $\overline{r}^{\ t}=\begin{pmatrix} 0 & \overline{r}^{\ t}_{T,B} \\ \overline{r}^{\ t}_{B,T} & 0 \end{pmatrix} $ with
$$ \overline{r}^{\ t}_{T,B}=\frac{1}{t} \sum_{\substack{m\in\{1,...,t\},\\ \paction^m_1=T}} \left(\mathds{1}_{a_2^m=R}-\mathds{1}_{a_2^m=L}\right)$$
and
$$ \overline{r}^{\ t}_{B,T}=\frac{1}{t} \sum_{\substack{m\in\{1,...,t\},\\ \paction^m_1=B}} \left(\mathds{1}_{a_2^m=L}-\mathds{1}_{a_2^m=R}\right).$$
And at stage $t+1$, play $T$ with probability $p\in[0,1]$ such that
\begin{equation}\label{prob}
p\cdot( \overline{r}^{\ t}_{T,B})_{+}=(1-p)\cdot( \overline{r}^{\ t}_{B,T})_{+}.
\end{equation}
\end{example}

Now we are ready to prove that Theorem \ref{T1} is false for $n>2$.  

\begin{counterexample} \label{CE1}
Consider a $3$-player game where $A_1=\{T, B\}$, $A_{2}=\{L, R\}$, $A_3=\{E\}$ and the payoff matrix is 
\[
\bordermatrix{
 & L & R \cr
T & (1,1,0) & (0, 0, 0) \cr
B & (0,0,-1) & (1,1,1)  
}.
\]
We can define a no-regret strategy $\sigma_1$ for Player 1 as in the Example \ref{ex1} with the assumption that at the first step he plays $T$ and if $p$ in \eqref{prob} is arbitrary he chooses $p=1$. 
Moreover, we can repeat the argument for player 2 reversing the role of $\ T$ with $L$ and $B$ with $R$. Thus we have $\sigma_1$ and $\sigma_2$ no-regret strategies respectively for player 1 and player 2.
Then
$$r_{T,B}^1=-1 \quad and \quad r_{B,T}^1=0 \Rightarrow \sigma_{1}^2=T$$
$$r_{L,R}^1=-1 \quad and \quad r_{R,L}^1=0 \Rightarrow \sigma_{2}^2=L.$$
Repeating the reasoning we have  $\sigma^t_{1}=T$,  $\sigma^t_{2}=L$ for every $t\geq1$.  Hence
$$\E\left(\sum_{t=1}^{\rounds}\utility_{3}(a_1^t,a_2^t,E)\right)=\sum_{t=1}^{\rounds}\utility_{3}(T,L,E)=0.$$
However  in this game $\Vp=1$, so Theorem \ref{T1} is false for $n>2$.
\end{counterexample}

The natural question that now arises is: can Theorem \ref{T1} be true for one of the other Stackelberg values defined in \ref{SV}? To answer this question, we will modify the previous counterexample so that the optimizer can not guarantee the mixed Pessimistic Stackelberg value per round.

\begin{counterexample} \label{CE2}

Consider a 3-player game $\game$ as in the Counterexample \ref{CE1} but with payoffs given by 
\[
\bordermatrix{
 & L & R \cr
T & (1,1,0) & (0, 0, 1) \cr
B & (0,0,-1) & (1,1,0)  
}.
\]
In this game it is easy to see that $\vm=0$. Assume that player 1 and player 2 play no-regret as in the Counterexample \ref{CE1} with the difference that if at the stage $t$ the choice of $p$ in   \eqref{prob} is arbitrary, player 1 play 
$$p=\begin{cases} 1 & if \ \sigma^{t-1}=(B,L) \\ 0 & if \  \sigma^{t-1}=(T,L) \ or \  \sigma^{t-1}=(B,R) \\ arbitrary & otherwise \end{cases}$$ 
and Player 2 plays
$$p=\begin{cases} 1 & if \ \sigma^{t-1}=(B,R) \ or \ \sigma^{t-1}=(B,L)  \\ 0 & if \   \sigma^{t-1}=(T,L) \\ arbitrary & otherwise. \end{cases}$$ 
Then
$$r_{T,B}^1=-1 \quad and \quad r_{B,T}^1=0 \Rightarrow \sigma_{1}^2=B$$
$$r_{L,R}^1=-1 \quad and \quad r_{R,L}^1=0 \Rightarrow \sigma_{2}^2=R.$$
$$r_{T,B}^2=-\frac{1}{2} \quad and \quad r_{B,T}^2=-\frac{1}{2} \Rightarrow \sigma_{1}^3=B$$
$$r_{L,R}^2=-\frac{1}{2} \quad and \quad r_{R,L}^2=-\frac{1}{2} \Rightarrow \sigma_{2}^3=L.$$
$$r_{T,B}^3=-\frac{1}{3} \quad and \quad r_{B,T}^3=0 \Rightarrow \sigma_{1}^4=T$$
$$r_{L,R}^3=0 \quad and \quad r_{R,L}^3=-\frac{1}{3} \Rightarrow \sigma_{2}^4=L,$$
and the process repeats. Hence, this couple of no-regret strategies induce the following sequence of action: $(T,L), \ (B,R), \ (B,L), \ (T,L), \ (B,R), \ (B,L),...$
and the cumulative payoff for the optimizer is

$$\frac{1}{\rounds}\E\left(\sum_{t=1}^{\rounds}\utility_{3}(a_1^t,a_2^t,E)\right)=-\frac{1}{3}<\vm.$$
\end{counterexample}

\section{Main Result}
In the counterexamples \ref{CE1} and \ref{CE2} the learners have played a 2 player game and their No-Internal Regret behavior has generated a sequence of actions whose frequency represents a correlated equilibrium of the game. Therefore, in this section we generalize the definition of Stackelberg equilibrium using the concept of correlated equilibrium instead of Nash equilibrium and we show that the optimizer can guarantee the value of this {\sl correlated Stackelberg equilibrium}.

\begin{definition}
The profile of strategies $(\caction,\alpha)\in \Delta(\setactions_{-n}) \times \mixedset_n$ is a {\sl correlated Pessimistic Stackelberg Equilibrium} ($CPSE$) if
$$\utility_n(\caction,\alpha)=\max_{y\in \mixedset_n}\min_{\Phi \in CED(\game_{y})}\utility_n(\Phi,y),$$
Where $\game_{y}$ is the $n-1$ player game in which the payoffs are given by $\utility_{\player}(\cdot,y)$; in this game players do not observe realizations of actions of player n. We call the value $\vc:=\utility_n(\caction,\alpha)$ the correlated Pessimistic Stackelberg value of the game.
\end{definition}
Similarly, we can define the Hannan Stackelberg equilibrium as follow.
\begin{definition}
The profile of strategies $(\caction,\alpha)\in \Delta(\setactions_{-n}) \times \mixedset_n$ is an {\sl Hannan Pessimistic Stackelberg Equilibrium} ($HPSE$) if
$$\utility_n(\caction,\alpha)=\max_{y\in \mixedset_n}\min_{\Phi \in \Han(\game_{y})}\utility_n(\Phi,y),$$
We call the value $\vh:=\utility_n(\caction,\alpha)$ the Hannan Pessimistic Stackelberg value of the game.
\end{definition}

\begin{theorem} \label{corrStack}
Let $\vc$ be the correlated pessimistic Stackelberg value of the game $\game$. If the learners are playing no-internal regret, then for every $\epsilon>0$ the optimizer can guarantee at least $(\vc-\epsilon)\rounds$ expected cumulative utility for $\rounds$ large enough.
\end{theorem}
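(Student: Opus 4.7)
The plan is for the optimizer to commit to a fixed mixed strategy $y^\ast \in \mixedset_n$ that (approximately) achieves the outer maximum in the definition of $\vc$, and then to let the learners' own no-internal regret—together with Theorem \ref{J} applied to the ``effective'' $(n-1)$-player game $\game_{y^\ast}$—do the rest. Fix $\epsilon > 0$ and pick $y^\ast$ satisfying
\[
\min_{\Phi \in \Corr(\game_{y^\ast})} \utility_n(\Phi, y^\ast) \geq \vc - \tfrac{\epsilon}{2}
\]
(the inner minimum is attained since $\Corr(\game_{y^\ast})$ is compact convex and $\utility_n(\cdot, y^\ast)$ is linear, and the outer max can be replaced by a supremum with $\epsilon/2$ slack if necessary). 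The optimizer then draws $\paction_n^t \sim y^\ast$ independently at every round.

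Let $\bar{\emp}_{\rounds} \in \mixedset(\setactions_{-n})$ denote the empirical distribution of the learners' joint actions after $\rounds$ rounds. The crux is to show that $d(\bar{\emp}_{\rounds}, \Corr(\game_{y^\ast})) \to 0$ almost surely, since then continuity of $\utility_n$ together with the bound $\utility_n(\Phi, y^\ast) \geq \vc - \epsilon/2$ for every $\Phi \in \Corr(\game_{y^\ast})$ immediately yields the theorem. To establish the convergence I would transfer each learner's no-internal regret from $\game$ to $\game_{y^\ast}$. For a learner $\player \in \{1,\dots,n-1\}$ and $\paction, b \in \setactions_\player$, compare the realized internal regret term
\[
R_{\rounds}^\player(\paction,b) := \tfrac{1}{\rounds} \sum_{\substack{t \leq \rounds \\ \paction_\player^t = \paction}} \bigl(\utility_\player(b, \paction^t_{-\player}) - \utility_\player(\paction, \paction^t_{-\player})\bigr)
\]
with its analogue $\tilde R_{\rounds}^\player(\paction,b)$ obtained by replacing the realized $\paction^t_n$ with the mixed strategy $y^\ast$ inside both payoffs. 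Because $\paction^t_n$ is drawn from $y^\ast$ independently of $\mathcal{F}_{t-1}$ and of $\paction^t_{-n}$, the difference $R_{\rounds}^\player - \tilde R_{\rounds}^\player$ is a normalized sum of bounded zero-mean martingale increments; Azuma--Hoeffding and Borel--Cantelli yield $R_{\rounds}^\player - \tilde R_{\rounds}^\player \to 0$ a.s., uniformly over the finitely many triples $(\player, \paction, b)$. The no-internal regret hypothesis forces $\max_{\paction,b} R_{\rounds}^\player(\paction,b) \to 0$ a.s., hence also $\max_{\paction,b} \tilde R_{\rounds}^\player(\paction,b) \to 0$ a.s., which is precisely the statement that every learner has no-internal regret in the $(n-1)$-player game $\game_{y^\ast}$. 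Theorem \ref{J} applied to $\game_{y^\ast}$ then gives $d(\bar{\emp}_{\rounds}, \Corr(\game_{y^\ast})) \to 0$ almost surely.

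To close the argument, observe that conditioning on the learners' actions and averaging the optimizer's independent draws gives
\[
\E\!\left[\sum_{t=1}^{\rounds} \utility_n(\paction^t)\right] = \rounds \cdot \E\bigl[\utility_n(\bar{\emp}_{\rounds}, y^\ast)\bigr].
\]
Continuity (in fact linearity) of $\utility_n$ combined with the almost-sure convergence of $\bar{\emp}_{\rounds}$ yields $\liminf_{\rounds} \utility_n(\bar{\emp}_{\rounds}, y^\ast) \geq \vc - \epsilon/2$ a.s.; boundedness of $\utility_n$ and Fatou's lemma then upgrade this to $\liminf_{\rounds} \E[\utility_n(\bar{\emp}_{\rounds}, y^\ast)] \geq \vc - \epsilon/2$, which implies $(\vc - \epsilon)\rounds$ expected cumulative utility for $\rounds$ large enough. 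The main obstacle is the martingale step transferring no-internal regret from $\game$ to $\game_{y^\ast}$: one must argue carefully that replacing the i.i.d.\ realized actions $\paction^t_n$ by the mixed strategy $y^\ast$ inside the regret expressions introduces only an $o_{\rounds}(1)$ error, uniformly over the finite set of relevant indices. Once this transfer is made, the rest is a direct application of Theorem \ref{J} plus a continuity and bounded convergence argument.
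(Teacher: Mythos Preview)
Your proposal is correct and follows essentially the same architecture as the paper's proof: the optimizer commits to a fixed mixed strategy, one transfers each learner's no-internal regret from $\game$ to the effective game $\game_{y^\ast}$ via a martingale/law-of-large-numbers argument, deduces that the learners' empirical distribution converges a.s.\ to $\Corr(\game_{y^\ast})$, and then passes to expectations. The only differences are in the tools chosen at each step---you use Azuma--Hoeffding plus Borel--Cantelli where the paper invokes a strong law of large numbers for uncorrelated variables (its Corollary~\ref{CorSLLN}), you cite Theorem~\ref{J} as a black box where the paper directly checks the CED inequalities for the empirical distribution, and you finish with Fatou where the paper uses dominated convergence together with an explicit projection onto $\Corr(\game_{y^\ast})$---but none of these constitutes a genuinely different route.
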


To prove Theorem \ref{corrStack} we recall the version of the strong law of large numbers for uncorrelated random variables.

\begin{theorem}[{\bf Strong Law of Large Number}]
Let $\{X_t\}_{t\in \N}$ be a sequence of random variable such that $\E[X_t]=0$ and $Var[X_t]\leq C<\infty$. If for every $t,s$ in $\N$  with $t\neq s$, $\E[X_tX_s]=0$ then
$$\frac{\sum_{t=1}^{M}X_t}{M}\xrightarrow{M\to\infty}0 \quad a.s.$$
\end{theorem}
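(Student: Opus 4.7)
The plan is to follow the classical Rajchman-style argument: first prove convergence to $0$ along the sparse subsequence $M_k = k^2$ using Chebyshev and Borel--Cantelli, then control the fluctuations inside each block $k^2 \le M < (k+1)^2$ by a direct second-moment estimate. The pairwise uncorrelatedness hypothesis $\E[X_tX_s]=0$ ($t\neq s$) is exactly what is needed to make $\mathrm{Var}$ additive, and that will do all the work.

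First I would set $S_M = \sum_{t=1}^M X_t$ and observe
\[
\E[S_M^2] = \sum_{t=1}^M \E[X_t^2] + 2\sum_{t<s}\E[X_tX_s] = \sum_{t=1}^M \mathrm{Var}(X_t) \le CM,
\]
using $\E[X_t]=0$, the uncorrelation hypothesis, and the variance bound. Applied along $M_k = k^2$, Chebyshev gives $\Prob(|S_{k^2}|/k^2 > \epsilon) \le C/(\epsilon^2 k^2)$, which is summable in $k$, so Borel--Cantelli yields $S_{k^2}/k^2 \to 0$ almost surely.

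Next I would handle the blocks. Define the maximal fluctuation
\[
D_k \;=\; \max_{k^2 \le M < (k+1)^2} |S_M - S_{k^2}|.
\]
Since $\max$ is bounded by the corresponding sum and each $\E[(S_M - S_{k^2})^2] = \sum_{j=k^2+1}^M \mathrm{Var}(X_j) \le C(M-k^2)$ by uncorrelatedness,
\[
\E[D_k^2] \;\le\; \sum_{M=k^2}^{(k+1)^2-1} \E[(S_M - S_{k^2})^2] \;\le\; C\sum_{j=0}^{2k} j \;\le\; 3Ck^2.
\]
Chebyshev then gives $\Prob(D_k > \epsilon k^2) \le 3C/(\epsilon^2 k^2)$, again summable, so by Borel--Cantelli $D_k/k^2 \to 0$ a.s. To conclude, for any $M$ with $k^2 \le M < (k+1)^2$,
\[
\left|\frac{S_M}{M}\right| \;\le\; \frac{|S_{k^2}|}{k^2} + \frac{D_k}{k^2},
\]
and both terms tend to $0$ almost surely as $k\to\infty$ (equivalently $M\to\infty$).

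The main obstacle, and the reason one has to depart from the textbook i.i.d. proof, is step three: without a martingale or independence structure, Doob's or Kolmogorov's maximal inequality is not available, so the block maximum $D_k$ has to be bounded crudely by trading the $\max$ for a $\sum$. I expect this trade to be affordable precisely because the block length $2k+1$ is much smaller than $k^2$, so the extra factor introduced by the sum is absorbed in the denominator $k^4$ coming from Chebyshev. Everything else is a standard Borel--Cantelli assembly.
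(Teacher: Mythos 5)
Your proof is correct: it is the standard Rajchman argument (second-moment bound via pairwise uncorrelatedness, Chebyshev and Borel--Cantelli along the subsequence $k^2$, then a crude $\max\le\sum$ bound on the block fluctuations, which works because the block length $2k+1$ is negligible against the $k^4$ from Chebyshev). The paper states this theorem without any proof, simply recalling it as a known result, so there is no paper argument to compare against; your write-up is a complete and standard justification of the cited fact.
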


\begin{corollary} \label{CorSLLN}
Let $\game$ be a n-player game. Let $\sigma_n$ be the strategy of player $n$ that consists of playing $\alpha\in\Delta_n$ in every period. For each profile of strategies $\sigma_{-n}$ of the other players, and for every $i\in\setplayers$
$$ \frac{1}{M}\sum_{t=1}^M (\utility_i(a_{-n}^t,a_n^t)-\utility_i(a_{-n}^t,\alpha)) \xrightarrow{M\to\infty}0 \quad \Prob_{\sigma} \ a.s.$$
\end{corollary}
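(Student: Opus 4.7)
The plan is to apply the strong law of large numbers stated above to the sequence
\[
X_t := u_i(a_{-n}^t, a_n^t) - u_i(a_{-n}^t, \alpha),
\]
by verifying its three hypotheses: zero mean, uniformly bounded variance, and pairwise uncorrelation. Let $\mathcal{F}_t$ denote the natural filtration generated by the play up through round $t$. The key structural fact I would exploit is that player $n$'s action $a_n^t$ is drawn from the fixed distribution $\alpha$ independently of the history and of the other players' randomization at stage $t$; in particular, $a_n^t$ and $a_{-n}^t$ are conditionally independent given $\mathcal{F}_{t-1}$.

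First I would check $\E[X_t] = 0$. Conditioning on $\mathcal{F}_{t-1}$ and using the conditional independence of $a_n^t$ from $a_{-n}^t$, together with the linear extension of $u_i$ to mixed profiles, gives
\[
\E\bigl[u_i(a_{-n}^t, a_n^t) \mid \mathcal{F}_{t-1}\bigr] = u_i(a_{-n}^t, \alpha),
\]
so $\E[X_t \mid \mathcal{F}_{t-1}] = 0$, and taking expectations yields $\E[X_t] = 0$. Since $G$ is a finite game, $|u_i|$ is bounded by some constant $K$, so $|X_t| \leq 2K$ and $\mathrm{Var}[X_t] \leq 4K^2 =: C$, which supplies the uniform variance bound.

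The pairwise uncorrelation $\E[X_t X_s] = 0$ for $t \neq s$ follows from the observation that $\{X_t\}$ forms a martingale difference sequence with respect to $\{\mathcal{F}_t\}$. Indeed, for $t < s$ the variable $X_t$ is $\mathcal{F}_t$-measurable and $\mathcal{F}_t \subseteq \mathcal{F}_{s-1}$, so conditioning on $\mathcal{F}_{s-1}$ gives
\[
\E[X_t X_s] = \E\bigl[X_t \, \E[X_s \mid \mathcal{F}_{s-1}]\bigr] = 0
\]
by the previous step. With all three hypotheses in place, the stated strong law of large numbers delivers $\tfrac{1}{M}\sum_{t=1}^M X_t \to 0$ almost surely under $\Prob_\sigma$, which is exactly the claim.

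The one point deserving care, and where essentially all the content of the argument lies, is making rigorous the conditional independence of $a_n^t$ and $a_{-n}^t$ given the history $\mathcal{F}_{t-1}$. This rests on the standing modeling convention that at each stage all players randomize independently given the past, combined with the assumption that player $n$'s strategy is the constant mixed action $\alpha$ regardless of history. Once this is granted, the rest of the proof is a routine application of the SLLN.
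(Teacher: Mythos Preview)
Your approach is correct and is precisely what the paper intends: the corollary is stated without proof, implicitly as a direct application of the preceding SLLN to the sequence $X_t = u_i(a_{-n}^t, a_n^t) - u_i(a_{-n}^t, \alpha)$, and your verification of the martingale-difference structure is the natural way to check the hypotheses.

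One small slip to clean up: the displayed identity
\[
\E\bigl[u_i(a_{-n}^t, a_n^t) \mid \mathcal{F}_{t-1}\bigr] = u_i(a_{-n}^t, \alpha)
\]
cannot hold as written, because the right-hand side depends on $a_{-n}^t$, which is not $\mathcal{F}_{t-1}$-measurable. What the conditional independence actually gives is
\[
\E\bigl[u_i(a_{-n}^t, a_n^t) \mid \mathcal{F}_{t-1}\bigr] = \E\bigl[u_i(a_{-n}^t, \alpha) \mid \mathcal{F}_{t-1}\bigr],
\]
or equivalently $\E[X_t \mid \mathcal{F}_{t-1}, a_{-n}^t] = 0$; either formulation immediately yields $\E[X_t \mid \mathcal{F}_{t-1}] = 0$, and the remainder of your argument (bounded variance, orthogonality via the martingale-difference property, and the SLLN) goes through unchanged.
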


Finally, we are ready to prove our main result.

\begin{proof} [Proof Theorem \ref{corrStack}]
Let $(\caction,\alpha)\in\Delta(\setactions_{-n}) \times \mixedset_n$ be a correlated Stackelberg equilibrium. Fix $\epsilon>0$. Suppose that at each round the optimizer plays the strategy $\alpha$. Since the followers play no internal regret; for every $i\in\setplayers$ and every $a,b\in\setactions_i$
\begin{align} \label{NIR2}
\frac{1}{\rounds}\sum_{t=1}^{\rounds}\ind_{\paction_i^t=a}(\utility_{\player}(b,\paction_{-i}^t)-\utility_{\player}(a,\paction_{-i}^t))\leq o(1) \qquad \Prob_{\sigma} \ a.s.
\end{align}

Fix $a,b\in\setactions_i$, then \footnote{We use the notation $a_{-i,n}$ to say that we consider all elements except element $i$ and $n$; i.e. $a_{-i,n}=(a_1,...,a_{i-1},a_{i+1},...,a_{n-1})$.}

\begin{align}
\frac{1}{\rounds}\sum_{t=1}^{\rounds}\ind_{\paction_i^t=a}(\utility_{\player}(b,\paction_{-i,n}^t,\alpha)-\utility_{\player}(a,\paction_{-i,n}^t,\alpha))&= \frac{1}{\rounds}\sum_{t=1}^{\rounds}\ind_{\paction_i^t=a}(\utility_{\player}(b,\paction_{-i,n}^t,\alpha)-\utility_{\player}(b,\paction_{-i}^t)) \nonumber \\
&+\frac{1}{\rounds}\sum_{t=1}^{\rounds}\ind_{\paction_i^t=a}(\utility_{\player}(b,\paction_{-i}^t)-\utility_{\player}(a,\paction_{-i}^t)) \nonumber \\
&+\frac{1}{\rounds}\sum_{t=1}^{\rounds}\ind_{\paction_i^t=a}(\utility_{\player}(a,\paction_{-i}^t)-\utility_{\player}(a,\paction_{-i,n}^t,\alpha)) \nonumber \\
&= \frac{1}{\rounds}\sum_{t=1}^{\rounds}(v_{\player}(\paction_{-n}^t,\alpha)-v_{\player}(\paction^t)) \label{p1} \\
&+\frac{1}{\rounds}\sum_{t=1}^{\rounds}\ind_{\paction_i^t=a}(\utility_{\player}(b,\paction_{-i}^t)-\utility_{\player}(a,\paction_{-i}^t)) \label{p2} \\
&+\frac{1}{\rounds}\sum_{t=1}^{\rounds}(w_{\player}(\paction^t)-w_{\player}(\paction^t,\alpha)) \label{p3}
\end{align}

Where $v_{\player}(\paction^t):=\ind_{\paction_i^t=a}\utility_{\player}(b,\paction_{-i}^t)$ and $w_{\player}(\paction^t):=\ind_{\paction_i^t=a}\utility_{\player}(a,\paction_{-i}^t)$. Now, \eqref{p1} and \eqref{p3} go to $0$ $\Prob_{\sigma}$ a.s. by Corollary \ref{CorSLLN}, and using the inequality \eqref{NIR2} we obtain that for every $i\in\setplayers$ and for every $a,b\in\setactions_{\player}$
$$\frac{1}{\rounds}\sum_{t=1}^{\rounds}\ind_{\paction_i^t=a}(\utility_{\player}(b,\paction_{-i,n}^t,\alpha)-\utility_{\player}(a,\paction_{-i,n}^t,\alpha))\leq o(1) \qquad \Prob_{\sigma} \ a.s.$$

Hence, denoting with $Z_M \in \Delta(\setactions_{-n})$ the empirical distribution; for every $i\in\setplayers$ and for every $a,b\in\setactions_{\player}$
$$\sum_{\paction_{-i,n}\in\setactions_{-i,n}} Z_{\rounds}(a,\paction_{-i,n})(\utility_{\player}(b,\paction_{-i,n},\alpha)-\utility_{\player}(a,\paction_{-i,n},\alpha))\leq o(1) \qquad \Prob_{\sigma} \ a.s.$$
So, 
$$dist(Z_{\rounds}, CED(\game_{\alpha}))\xrightarrow{M\to\infty} 0 \qquad \Prob_{\sigma} \ a.s.$$

By dominated convergence theorem 
$$\E_{\sigma}[dist(Z_{\rounds}, CED(\game_{\alpha}))]\xrightarrow{M\to\infty} 0;$$
therefore for every $\delta>0$ there exist $\rounds_0$ such that

$$\E[dist(Z_{\rounds}, CED(\game_{\alpha}))]\leq\delta \qquad \forall \rounds\geq \rounds_0,$$

Denote by $\Psi_t$ the projection of $Z_t$ to the set $CED(\game_{\alpha})$, then
\begin{align}
\E\left[ \frac{1}{\rounds}\sum_{t=1}^{\rounds}\utility_n(\paction_{-n}^t, \paction_n^t)\right]&=\E\left[ \frac{1}{\rounds}\sum_{t=1}^{\rounds}\utility_n(\paction_{-n}^t, \alpha)\right]=\E\left[\sum_{\paction_{-n}\in\setactions_{-n}}Z_{\rounds}(\paction_{-n})\utility_n(\paction_{-n},\alpha)\right] 
 \nonumber \\
&\geq\E\left[\sum_{\paction_{-n}\in\setactions_{-n}}(Z_{\rounds}(\paction_{-n})-\Psi_{\rounds}(\paction_{-n})+\Psi_{\rounds}(\paction_{-n}))\utility_n(\paction_{-n},\alpha)\right] \nonumber \\
&\geq\E\left[\sum_{\paction_{-n}\in\setactions_{-n}}\Psi_{\rounds}(\paction_{-n})\utility_n(\paction_{-n},\alpha)\right]-\E\left[\sum_{\paction_{-n}\in\setactions_{-n}}(\Psi_{\rounds}(\paction_{-n})-Z_{\rounds}(\paction_{-n}))\utility_n(\paction_{-n},\alpha)\right] \nonumber \\
&\geq \E[\utility_n(\Psi_{\rounds},\alpha)] - \E\left[\sum_{\paction_{-n}\in\setactions_{-n}}\abs{\Psi_{\rounds}(\paction_{-n})-Z_{\rounds}(\paction_{-n})}\cdot\abs{\utility_n(\paction_{-n},\alpha)}\right]  \nonumber \\
&\geq \E[\utility_n(\Psi_{\rounds},\alpha)] - \norm{\utility_{n}}\E\left[\sum_{\paction_{-n}\in\setactions_{-n}}\abs{\Psi_{\rounds}(\paction_{-n})-Z_{\rounds}(\paction_{-n})}\right]  \nonumber \\
&\geq  \vc- \norm{\utility_{n}}\E[dist(Z_{\rounds}, CED(\game_{\alpha}))]\geq \vc-\norm{\utility_{n}}\delta.  \nonumber
\end{align}
We conclude choosing $\delta=\frac{\epsilon}{\norm{\utility_{n}}}$.
\end{proof}

Now, if the learners play No-Regret instead of No-Internal regret, using the same argument we can prove that the optimizer can guarantee at least the Hannan pessimistic Stackelberg value.

\begin{theorem} \label{HanStack}
Let $\vh$ be the Hannan pessimistic Stackelberg value of the game $\game$. If the learners are playing No-Regret, then for every $\epsilon>0$ the optimizer can guarantee at least $(\vh-\epsilon)\rounds$ expected cumulative utility for $\rounds$ large enough.
\end{theorem}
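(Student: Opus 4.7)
The plan is to mimic the proof of Theorem \ref{corrStack} essentially verbatim, but replace the No-Internal-Regret inequality with its No-External-Regret analogue and replace $CED(\game_\alpha)$ with $\Han(\game_\alpha)$ throughout. Concretely, I would fix a Hannan pessimistic Stackelberg equilibrium $(\caction,\alpha)\in \Delta(\setactions_{-n})\times \mixedset_n$ and instruct the optimizer to play the mixed action $\alpha$ in every round. The goal then becomes to show that the empirical distribution $\emp_\rounds\in\Delta(\setactions_{-n})$ of the learners' joint actions converges almost surely to $\Han(\game_\alpha)$, after which the projection/dominated-convergence argument at the end of the proof of Theorem \ref{corrStack} applies without change to give the lower bound $\vh-\norm{\utility_n}\delta$ on the expected per-round utility.

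The key step is the Hannan analogue of the inequality just above display \eqref{NIR2}. Since each learner $i\in\{1,\dots,n-1\}$ plays No-Regret and the optimizer plays $\alpha$, for every $b\in\setactions_i$
\begin{equation*}
\frac{1}{\rounds}\sum_{t=1}^{\rounds}\bigl(\utility_{\player}(b,\paction_{-\player}^t)-\utility_{\player}(\paction^t)\bigr)\leq o(1)\qquad \Prob_\sigma\ a.s.
\end{equation*}
I would then apply exactly the three-term telescoping decomposition of the proof of Theorem \ref{corrStack}, but without the indicator $\ind_{\paction_i^t=a}$, writing
\begin{align*}
\utility_\player(b,\paction_{-\player,n}^t,\alpha)-\utility_\player(\paction_{-n}^t,\alpha)
&=\bigl[\utility_\player(b,\paction_{-\player,n}^t,\alpha)-\utility_\player(b,\paction_{-\player}^t)\bigr]\\
&\quad +\bigl[\utility_\player(b,\paction_{-\player}^t)-\utility_\player(\paction^t)\bigr]\\
&\quad +\bigl[\utility_\player(\paction^t)-\utility_\player(\paction_{-n}^t,\alpha)\bigr].
\end{align*}
Averaging over $t=1,\dots,\rounds$, the first and third Cesàro averages vanish $\Prob_\sigma$-a.s.\ by Corollary \ref{CorSLLN} (applied to the bounded functions $(\paction_{-n},\paction_n)\mapsto \utility_\player(b,\paction_{-\player,n},\paction_n)$ and $\utility_\player$ respectively), while the second is $\leq o(1)$ by the displayed No-Regret inequality. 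Hence for every $i$ and every $b\in\setactions_i$,
\begin{equation*}
\sum_{\paction_{-n}\in\setactions_{-n}}\emp_\rounds(\paction_{-n})\bigl(\utility_\player(b,\paction_{-\player,n},\alpha)-\utility_\player(\paction_{-n},\alpha)\bigr)\leq o(1)\qquad \Prob_\sigma\ a.s.,
\end{equation*}
which is exactly the defining set of linear inequalities for $\Han(\game_\alpha)$. Therefore $\mathrm{dist}(\emp_\rounds,\Han(\game_\alpha))\to 0$ almost surely.

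From here I would copy the final block of the proof of Theorem \ref{corrStack} with $\Han(\game_\alpha)$ in place of $CED(\game_\alpha)$: dominated convergence gives $\E[\mathrm{dist}(\emp_\rounds,\Han(\game_\alpha))]\leq \delta$ for $\rounds$ large, and projecting $\emp_\rounds$ onto $\Han(\game_\alpha)$ yields
\begin{equation*}
\E\!\left[\tfrac{1}{\rounds}\sum_{t=1}^{\rounds}\utility_n(\paction^t)\right]\geq \vh - \norm{\utility_n}\,\delta,
\end{equation*}
so the choice $\delta=\epsilon/\norm{\utility_n}$ finishes the proof. The main obstacle, as in Theorem \ref{corrStack}, is the passage from ``$a_n^t$ is drawn i.i.d.\ from $\alpha$'' to ``the optimizer actually plays $\alpha$'' inside the inequalities governing the learners' behavior; this is precisely what Corollary \ref{CorSLLN} handles, and it is the only step where one must be careful that the random variables whose partial sums we average have zero mean and are pairwise uncorrelated, which holds here because conditional on the history through time $t-1$ the realization $\paction_n^t$ is an independent draw from $\alpha$.
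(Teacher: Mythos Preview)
Your proposal is correct and follows essentially the same approach as the paper's own proof: play $\alpha$ every round, use the No-Regret inequality together with Corollary \ref{CorSLLN} via the same three-term telescoping (now without the indicator) to conclude $\mathrm{dist}(\emp_\rounds,\Han(\game_\alpha))\to 0$ a.s., then finish with dominated convergence and the projection bound exactly as in Theorem \ref{corrStack}. The paper's proof is terser---it simply writes ``repeating the passage in Theorem \ref{corrStack}''---but the content is identical to what you spell out.
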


\begin{proof}
Let $(\caction,\alpha)\in\Delta(\setactions_{-n}) \times \mixedset_n$ be an Hannan Stackelberg equilibrium. Fix $\epsilon>0$. Suppose that at each round the optimizer plays the strategy $\alpha$. Since the followers play no regret; for every $i\in\setplayers$ and every $a\in\setactions_i$
\begin{equation} \label{NIR2.1}
\frac{1}{\rounds}\sum_{t=1}^{\rounds}(\utility_{\player}(a,\paction_{-i}^t)-\utility_{\player}(\paction_{i}^t\paction_{-i}^t))\leq o(1) \qquad \Prob_{\sigma} \ a.s.
\end{equation}
By Corollary \ref{CorSLLN} and inequality \eqref{NIR2.1}; repeating the passage in Theorem \ref{corrStack}, we obtain that for every $i\in\setplayers$ and for every $a\in\setactions_{\player}$
$$\frac{1}{\rounds}\sum_{t=1}^{\rounds}(\utility_{\player}(a,\paction_{-i,n}^t,\alpha)-\utility_{\player}(\paction_{i},\paction_{-i,n}^t,\alpha))\leq o(1) \qquad \Prob_{\sigma} \ a.s.$$
Hence, denoting with $Z_M \in \Delta(\setactions_{-n})$ the empirical distribution; for every $i\in\setplayers$ and for every $a\in\setactions_{\player}$
$$\sum_{\paction_{-n}\in\setactions_{-n}} Z_{\rounds}(\paction_{-n})(\utility_{\player}(a,\paction_{-i,n},\alpha)-\utility_{\player}(\paction_{-n},\alpha))\leq o(1) \qquad \Prob_{\sigma} \ a.s.$$
So, 
$$dist(Z_{\rounds}, \Han(\game_{\alpha}))\xrightarrow{M\to\infty} 0 \qquad \Prob_{\sigma} \ a.s.$$

By dominated convergence theorem 
$$\E[dist(Z_{\rounds}, \Han(\game_{\alpha}))]\xrightarrow{M\to\infty} 0.$$
Denote by $\Psi_t$ the projection of $Z_t$ to the set $\Han(\game_{\alpha})$, then repeating the inequality used in the proof of Theorem \ref{corrStack}
\begin{align}
\E\left[ \frac{1}{\rounds}\sum_{t=1}^{\rounds}\utility_n(\paction_{-n}^t, \paction_n^t)\right]&\geq \E[\utility_n(\Psi_{\rounds},\alpha)] - \norm{\utility_{n}}\E\left[dist(Z_t, \Han(\game_{\alpha}))\right]  \geq \nonumber \\
&\geq  \vh- \norm{\utility_{n}}\E[dist(Z_t, \Han(\game_{\alpha}))]\geq \vh-\norm{\utility_{n}}\delta.  \nonumber
\end{align}
We conclude choosing $\delta=\frac{\epsilon}{\norm{\utility_{n}}}$.
\end{proof}

We conclude observing that Corollary \ref{CorSLLN} allows us to adapt the proof of Theorem \ref{corrStack} and Theorem \ref{HanStack} to the payoff instead of the expected payoff.

\begin{theorem} \label{corrStack2}
Let $\vc$ ($\vh$) be the correlated (Hannan) pessimistic Stackelberg value of the game $\game$. If the learners are playing No-Internal Regret (No-Regret), then for $M$ large enough the optimizer can guarantee a.s. at least $\rounds\vc-o(\rounds)$ ($\rounds\vh-o(\rounds)$) cumulative utility.
\end{theorem}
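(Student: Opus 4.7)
The plan is to imitate the argument for Theorem \ref{corrStack} but to retain almost-sure statements throughout, rather than passing to expectations via the dominated convergence theorem. As before, I instruct the optimizer to play the mixed action $\alpha$ of a correlated pessimistic Stackelberg equilibrium in every round, so that $\paction_n^t$ is an i.i.d.\ sequence with law $\alpha$ while the learners' actions $\paction_{-n}^t$ evolve according to their no-internal-regret procedures. The first step, already established within the proof of Theorem \ref{corrStack}, gives the almost-sure convergence
$$dist\bigl(\emp_{\rounds},\, \Corr(\game_\alpha)\bigr) \xrightarrow{\rounds\to\infty} 0 \qquad \Prob_\sigma\ \text{a.s.}$$
I would quote this without re-deriving it, since the derivation there is pointwise on a probability-one event.

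The new ingredient is an additional application of Corollary \ref{CorSLLN}, this time to player $n$ itself, which yields
$$\frac{1}{\rounds}\sum_{t=1}^{\rounds} \bigl(\utility_n(\paction_{-n}^t, \paction_n^t) - \utility_n(\paction_{-n}^t, \alpha)\bigr) \xrightarrow{\rounds\to\infty} 0 \qquad \Prob_\sigma\ \text{a.s.}$$
Rewriting the second sum as $\sum_{\paction_{-n}} \emp_{\rounds}(\paction_{-n})\,\utility_n(\paction_{-n}, \alpha) = \utility_n(\emp_{\rounds}, \alpha)$, and letting $\Psi_{\rounds}$ be the projection of $\emp_{\rounds}$ onto $\Corr(\game_\alpha)$, the triangle-inequality chain used in Theorem \ref{corrStack} now holds pointwise:
$$\utility_n(\emp_{\rounds}, \alpha) \geq \utility_n(\Psi_{\rounds}, \alpha) - \norm{\utility_n}\, dist\bigl(\emp_{\rounds},\, \Corr(\game_\alpha)\bigr) \geq \vc - \norm{\utility_n}\, dist\bigl(\emp_{\rounds},\, \Corr(\game_\alpha)\bigr),$$
the last step because $\Psi_{\rounds}\in\Corr(\game_\alpha)$. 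Intersecting the two almost-sure events above and letting $\rounds\to\infty$ then gives
$$\liminf_{\rounds\to\infty}\, \frac{1}{\rounds}\sum_{t=1}^{\rounds} \utility_n(\paction^t) \geq \vc \qquad \Prob_\sigma\ \text{a.s.},$$
which is exactly the desired statement $\sum_{t=1}^{\rounds} \utility_n(\paction^t) \geq \rounds\,\vc - o(\rounds)$ almost surely. The Hannan version is handled identically, replacing $\Corr(\game_\alpha)$ by $\Han(\game_\alpha)$ and invoking the corresponding a.s.\ step from the proof of Theorem \ref{HanStack}.

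The only real obstacle is bookkeeping: the no-internal-regret hypothesis and Corollary \ref{CorSLLN} each supply almost-sure convergence for a fixed player $\player$ and fixed pair $\paction,b\in\setactions_\player$, so the good event on which every inequality above holds is a finite intersection of probability-one events, hence still of probability one. No uniform-in-$\rounds$ control is required, since the chain of inequalities is pointwise in $\omega$ on that good event; this is exactly what lets one bypass the dominated convergence step and upgrade the in-expectation conclusion of Theorems \ref{corrStack} and \ref{HanStack} to the almost-sure conclusion stated here.
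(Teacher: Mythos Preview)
Your proposal is correct and follows essentially the same route as the paper: play $\alpha$ each round, quote the almost-sure convergence $dist(\emp_{\rounds},\Corr(\game_\alpha))\to 0$ established inside the proof of Theorem \ref{corrStack}, apply Corollary \ref{CorSLLN} to player $n$ to replace $\paction_n^t$ by $\alpha$ up to an $o(1)$ term, and then run the projection/triangle-inequality chain pointwise rather than in expectation. The only difference is cosmetic ordering and your explicit remark on intersecting finitely many probability-one events.
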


\begin{proof}
We prove the theorem in the case of no-internal regret. The case of no-regret is analogous. Let $(\caction,\alpha)\in\Delta(\setactions_{-n}) \times \mixedset_n$ be a correlated Stackelberg equilibrium. Suppose that at each round the optimizer plays the strategy $\alpha$. Since the followers play no-internal regret; as we have shown in the Theorem \ref{corrStack}
\begin{equation} \label{distcorr}
dist(Z_{\rounds}, CED(\game_{\alpha}))\xrightarrow{M\to\infty} 0 \qquad \Prob_{\sigma} \ a.s.
\end{equation}
Now
\begin{align}
 \frac{1}{\rounds}\sum_{t=1}^{\rounds}\utility_n(\paction_{-n}^t, \paction_n^t)&= \frac{1}{\rounds}\sum_{t=1}^{\rounds}(\utility_n(\paction_{-n}^t, \paction_n^t)-\utility_n(\paction_{-n}^t, \alpha))+\frac{1}{\rounds}\sum_{t=1}^{\rounds}\utility_n(\paction_{-n}^t, \alpha)\nonumber\\
 &\geq  \vc +\frac{1}{\rounds}\sum_{t=1}^{\rounds}(\utility_n(\paction_{-n}^t, \paction_n^t)-\utility_n(\paction_{-n}^t, \alpha))- \norm{\utility_{n}}\cdot dist(Z_{\rounds}, CED(\game_{\alpha})) \nonumber \\
 &= \vc -o(1) \quad \Prob_{\sigma} \ a.s.  \nonumber
\end{align}
Where in the last equality we use Corollary \ref{CorSLLN} and \eqref{distcorr}.
\end{proof}

\section{Acknowledgements}
 I am grateful to my advisors J\'er\^{o}me Renault and Fabien Gensbittel for helpful discussions. This research has benefited from the financial support of the ANR (Programme d’Investissement
d’Avenir ANR-17-EURE-0010), and the AI Interdisciplinary Institute ANITI, which is funded by the French ”Investing for the Future - PIA3” program under the Grant
agreement ANR-19-PI3A-0004.

\nocite{*}
\bibliographystyle{unsrt}  
\bibliography{bibliografiaa}  %%% Remove comment to use the external .bib file (using bibtex).

\begin{thebibliography}{10}

\bibitem{A1}
Yuan Deng, Jon Schneider, and Balasubramanian Sivan.
\newblock Strategizing against no-regret learners.
\newblock {\em CoRR}, abs/1909.13861, 2019.

\bibitem{Hannan}
James Hannan.
\newblock {\em 4. Approximation to rayes risk in repeated play}, pages 97--140.
\newblock Princeton University Press, 2016.

\bibitem{NR1}
Dean~P. Foster and Rakesh~V. Vohra.
\newblock A randomization rule for selecting forecasts.
\newblock {\em Oper. Res.}, 41(4):704--709, 1993.

\bibitem{NR2}
N.~Littlestone and M.~K. Warmuth.
\newblock The weighted majority algorithm.
\newblock In {\em Proceedings of the 30th Annual Symposium on Foundations of
  Computer Science}, SFCS '89, page 256–261, USA, 1989. IEEE Computer
  Society.

\bibitem{NR3}
Yoav Freund and Robert~E. Schapire.
\newblock A desicion-theoretic generalization of on-line learning and an
  application to boosting.
\newblock In Paul Vit{\'a}nyi, editor, {\em Computational Learning Theory},
  pages 23--37, Berlin, Heidelberg, 1995. Springer Berlin Heidelberg.

\bibitem{NR4}
Nicol\`{o} Cesa-Bianchi, Yoav Freund, David Haussler, David~P. Helmbold,
  Robert~E. Schapire, and Manfred~K. Warmuth.
\newblock How to use expert advice.
\newblock {\em J. ACM}, 44(3):427–485, may 1997.

\bibitem{NIR}
Dean~P. Foster and Rakesh~V. Vohra.
\newblock Asymptotic calibration.
\newblock {\em Biometrika}, 85(2):279–390, 1998.

\bibitem{NIR2}
Nicol\`{o} Cesa-Bianchi and G\'{a}bor Lugosi.
\newblock Potential-based algorithms in on-line prediction and game theory.
\newblock {\em Mach. Learn.}, 51(3):239–261, jun 2003.

\bibitem{HartMasColell}
Andreu Mas-Colell Sergiu~Hart.
\newblock {\em Simple Adaptive Strategies: From Regret-Matching to Uncoupled
  Dynamics}.
\newblock World Scientific Series in Economic Theory. World Scientific
  Publishing Company, 2013.

\bibitem{AUMANN197467}
Robert~J. Aumann.
\newblock Subjectivity and correlation in randomized strategies.
\newblock {\em Journal of Mathematical Economics}, 1(1):67--96, 1974.

\bibitem{NIR1}
Sergiu Hart and Andreu Mas-Colell.
\newblock A simple adaptive procedure leading to correlated equilibrium.
\newblock 1997.

\bibitem{flokas2020noregret}
Lampros Flokas, Emmanouil-Vasileios Vlatakis-Gkaragkounis, Thanasis Lianeas,
  Panayotis Mertikopoulos, and Georgios Piliouras.
\newblock No-regret learning and mixed nash equilibria: They do not mix, 2020.

\bibitem{10.1257/000282803322655581}
Sergiu Hart and Andreu Mas-Colell.
\newblock Uncoupled dynamics do not lead to nash equilibrium.
\newblock {\em American Economic Review}, 93(5):1830--1836, December 2003.

\bibitem{stackelberg}
Heinrich~Von Stackelberg.
\newblock Market structure and equilibrium.
\newblock {\em Springer Science \& Business Media}, 2010.

\bibitem{A2}
Joyee Deb, Julio Gonzalez-Diaz, and Jerome Renault.
\newblock Uniform folk theorems in repeated anonymous random matching games.
\newblock {\em Games end Economic Behavior}, 100(C), 2016.

\bibitem{Jerome}
R.~Laraki, J.~Renault, and S.~Sorin.
\newblock {\em Mathematical Foundations of Game Theory}.
\newblock Springer, Cham, 2019.

\bibitem{TImNonRegret}
Tim Roughgarden.
\newblock Algorithmic game theory lecture 17: No-regret dynamics.
\newblock 2013.

\bibitem{journals/jet/HartM01}
Sergiu Hart and Andreu Mas-Colell.
\newblock A general class of adaptive strategies.
\newblock {\em J. Econ. Theory}, 98(1):26--54, 2001.

\bibitem{Blackwell1956AnAO}
D.~Blackwell.
\newblock An analog of the minimax theorem for vector payoffs.
\newblock {\em Pacific Journal of Mathematics}, 6:1--8, 1956.

\bibitem{10.2307/1911154}
Robert~J. Aumann.
\newblock Correlated equilibrium as an expression of bayesian rationality.
\newblock {\em Econometrica}, 55(1):1--18, 1987.

\bibitem{soton434827}
Stefano Coniglio, Alberto Marchesi, and Nicola Gatti.
\newblock Computing a pessimistic stackelberg equilibrium with multiple
  followers: the mixed-pure case.
\newblock {\em Algorithmica}, November 2019.

\end{thebibliography}
%%% and comment out the ``thebibliography'' section.

\end{document}